\newtheorem{theorem}{Theorem}
\theoremstyle{remark}
\newtheorem{remark}{Remark}
\begin{document}

\title{Electrostatic Discharge Currents Representation using the Multi-Peaked Analytically Extended Function by Interpolation on a D-Optimal Design}

\author{
Karl Lundeng\aa{}rd$^{*}$, Milica Ran\v{c}i\'{c}$^{*}$, \\
Vesna Javor$^{\dagger}$ and 
Sergei Silvestrov$^{*}$
\\ \\
$^{*}$Division of Applied Mathematics, \\
UKK, M\"{a}lardalen University\\
H\"{o}gskoleplan 1, Box 883, 721 23 V\"{a}ster\aa{}s, Sweden\\
Email: \{karl.lundengard, milica.rancic, sergei.silvestrov\}@mdh.se
\\ \\
$^{\dagger}$Department of Power Engineering, \\
Faculty of Electronic Engineering, University of Ni\v{s}\\ Aleksandra Medvedeva 14, 18000 Ni\v{s}, Serbia\\
Email: vesna.javor@elfak.ni.ac.rs}

\maketitle

\begin{abstract}
Multi-peaked analytically extended function (AEF), previously applied by the authors to modelling of lightning discharge currents, is used in this paper for representation of the electrostatic discharge (ESD) currents. The fitting to data is achieved by interpolation of certain data points. In order to minimize unstable behaviour, the exponents of the AEF are chosen from a certain arithmetic sequence and the interpolated points are chosen according to a D-optimal design. ESD currents' modelling is illustrated through two examples: one corresponding to an approximation of the IEC Standard 61000-4-2 waveshape, and the other to representation of some measured ESD current.
\end{abstract}

\section{Introduction}
Well-defined representation of real electrostatic discharge (ESD) currents is needed in order to establish realistic requirements for ESD generators used in testing of the equipment and devices, as well as to provide and improve the repeatability of tests. Such representations should be able to approximate the ESD currents waveshapes for various test levels, test set-ups and procedures, and also for various ESD conditions such as approach speeds, types of electrodes, relative arc length, humidity, etc. A mathematical function is needed for computer simulation of ESD phenomena, for verification of test generators and for improving standard waveshape definition. 

Functions previously proposed in the literature for modelling of ESD currents, are mostly linear combinations of exponential functions, Gaussian functions, Heidler functions or other functions, for a review see for example \cite{Lundengard_ICNPAA2016}.  
The Analytically Extended Function (AEF) has been previously proposed by the authors and successfully applied to lightning discharge modelling~\cite{Lundengard_PES,Lundengard_SJEE,Lundengard_MCAP} using least-square regression modelling. 

In this paper we analyse the applicability of the generalized multi-peaked AEF function to representation of ESD currents by interpolation of data points chosen according to a $D$-optimal design. This is illustrated through two examples corresponding to modelling of the IEC Standard 61000-4-2 waveshape,~\cite{IECStandard2000, IECStandard2009} and an experimentally measured ESD current from \cite{Katsivelis2010}.

\section{IEC 61000-4-2 Standard Current Waveshape}
ESD generators used in testing of the equipment and devices should be able to reproduce the same ESD current waveshape each time. This repeatability feature is ensured if the design is carried out in compliance with the requirements defined in the IEC 61000-4-2 Standard,~\cite{IECStandard2009}. 

Among other relevant issues, the Standard includes graphical representation of the typical ESD current, Fig.~\ref{fig:ESD_Standard}, and also defines, for a given test level voltage, required values of ESD current's key parameters. These are listed in Table~\ref{tab:ESDpar} for the case of the contact discharge, where:
\begin{itemize}
\item $I_{peak}$ is the ESD current initial peak;
\item $t_{r}$ is the rising time defined as the difference between time moments corresponding to $10\%$ and $90\%$ of the current peak $I_{peak}$, Fig.~\ref{fig:ESD_Standard};
\item $I_{30}$ and $I_{60}$ is the ESD current values calculated for time periods of 30 and 60 ns, respectively, starting from the time point corresponding to $10\%$ of $I_{peak}$, Fig.~\ref{fig:ESD_Standard}.
\end{itemize}  

\begin{table}[!t]
\renewcommand{\arraystretch}{1.5}
\caption{IEC 61000-4-2 Standard ESD Current and its Key Parameters,~\cite{IECStandard2009}.}
\label{tab:ESDpar}
\centering
\begin{tabular}{crcrc}
\hline
Voltage [kV] & $I_{peak}$ [A]& $t_{r}$ [ns] & $I_{30}$ [A]& $I_{60}$ [A] \\ 
\hline
2 & $7.5 \pm 15\%$ & $0.8 \pm 25\%$ & $4.0 \pm 30\%$ & $2.0 \pm 30\%$ \\
4 & $15.0 \pm 15\%$ & $0.8 \pm 25\%$ & $8.0 \pm 30\%$ & $4.0 \pm 30\%$ \\
6 & $22.5 \pm 15\%$ & $0.8 \pm 25\%$ & $12.0 \pm 30\%$ & $6.0 \pm 30\%$ \\
8 & $30.0 \pm 15\%$ & $0.8 \pm 25\%$ & $16.0 \pm 30\%$ & $8.0 \pm 30\%$ \\
\hline
\end{tabular}
\end{table}

\begin{figure}[!t]
 \centering
 \includegraphics[width=2.8in]{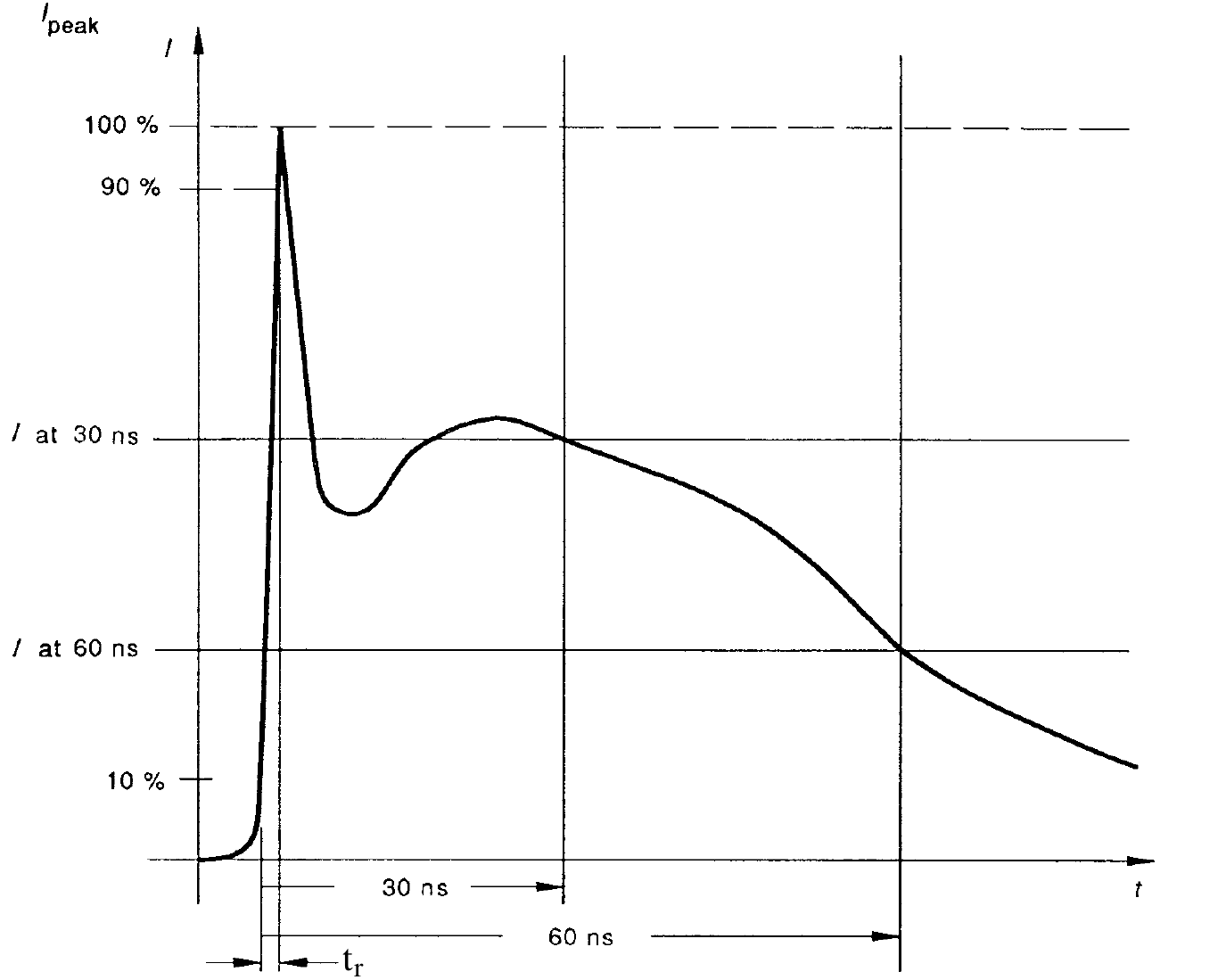}
 \caption{Illustration of the IEC 61000-4-2 Standard ESD current and its key parameters,~\cite{IECStandard2009}.}
 \label{fig:ESD_Standard}
\end{figure}

\section{Modelling of the IEC 61000-4-2 Standard Waveshape}
\subsection{Important Features of ESD Currents}
Various mathematical expressions have been introduced in the literature that can be used for representation of the ESD currents, either the IEC 61000-4-2 Standard one~\cite{IECStandard2009}, or experimentally measured ones, e.g.~\cite{Fotis2011}. These functions are to certain extent in accordance with the requirements given in Table~\ref{tab:ESDpar}. Furthermore, they have to satisfy the following:
\begin{itemize}
\item the value of the ESD current and its first derivative must be equal to zero at the moment $t=0$, since neither the transient current nor the radiated field generated by the ESD current can change abruptly at that moment.
\item the ESD current function must be time-integrable in order to allow numerical calculation of the ESD radiated fields.
\end{itemize} 

\subsection{Multi-Peaked Analytically Extended Function (AEF)}
A so-called multi-peaked analytically extended function (AEF) has been proposed and applied by the authors to lightning discharge current modelling in~\cite{Lundengard_SJEE,Lundengard_MCAP,Lundengard_PES}. Initial considerations on applying the function to ESD currents have also been made in \cite{Lundengard_ICNPAA2016}.

The AEF consists of scaled and translated functions of the form $x(\beta;t)=\left(te^{1-t}\right)^\beta$ that the authors have previously referred to as power-exponential functions~\cite{Lundengard_SJEE}.

Here we define the AEF with $p$ peaks as
\begin{equation}
 \label{eq:AEF_rise}
 i(t) = \displaystyle\sum_{k = 1}^{q-1} I_{m_k}+ I_{m_q} \sum_{k = 1}^{n_q} \eta_{q,k} x_{q,k}(t),
\end{equation}
for $t_{m_{q-1}} \leq t \leq t_{m_q}$, $1 \leq q \leq p$, and 
\begin{equation}
 \label{eq:AEF_decay}
 \displaystyle\sum_{k = 1}^{p} I_{m_k} \sum_{k = 1}^{n_{p+1}} \eta_{p+1,k} x_{p+1,k}(t),
\end{equation}
for $t_{m_p} \leq t$.

The current value of the first peak is denoted by $I_{m_1}$, the difference between each pair of subsequent peaks by $I_{m_2}, I_{m_3}, \ldots, I_{m_p}$, and their corresponding times by $t_{m_1}, t_{m_2}, \ldots, t_{m_p}$. In each time interval $q$, with $1 \leq q \leq p+1$, is denoted by $n_q$, $0 < n_q \in \mathbb{Z}$.  Parameters $\eta_{q,k}$ are such that $\eta_{q,k} \in \mathbb{R}$ for $q = 1,2,\ldots,p+1$, $k = 1,2,\ldots,n_q$ and $\displaystyle\sum_{k=1}^{n_q} \eta_{q,k} = 1$.
Furthermore $x_{q,k}(t),~1 \leq q \leq p+1$ is given by
\begin{equation}\label{eq:x_qk}
		x_{q,k}(t) =
		\begin{cases}
		 x\left(\beta_{q,k};\frac{t-t_{m_{q-1}}}{t_{m_q}-t_{m_{q-1}}}\right),~& 1 \leq q \leq p,\\
                  x\left(\beta_{q,k};\frac{t}{t_{m_q}}\right), & q=p+1.
        \end{cases} 
\end{equation}

\begin{remark}
 When previously applying the AEF, see \cite{Lundengard_PES,Lundengard_SJEE,Lundengard_MCAP}, all exponents ($\beta$-parameters) of the AEF were set to $\beta^2+1$ in order to guarantee that the derivative of the AEF is continuous. Here this condition will be satisfied in a different manner.
\end{remark}
	
Since the AEF is a linear function of elementary functions its derivative and integral can be found using standard methods. For explicit formulae please refer to~\cite[Theorems 1-3]{Lundengard_MCAP}.

Previously, the authors have fitted AEF functions to lightning discharge currents and ESD currents using the Marquardt least square method but have noticed that the obtained result varies greatly depending on how the waveforms are sampled. This is problematic, especially since the methodology becomes very computationally demanding when applied to large amounts of data. Here we will try one way to minimize the data needed but still enough to get an as good approximation as possible. 

The method examined here will be based on $D$-optimality of a regression model. A $D$-optimal design is found by choosing sample points such that the determinant of the Fischer information matrix of the model is minimized. For a standard linear regression model this is also equivalent, by the so-called Kiefer-Wolfowitz equivalence criterion, to $G$-optimality which means that the maximum of the prediction variance will be minimized. These are standard results in the theory of optimal experiment design and a summary can be found for example in \cite{Melas}.

Minimizing the prediction variance will in out case mean maximizing the robustness of the model. This does not guarantee a good approximation but it will increase the chances of the method working well when working with limited precision and noisy data and thus improve the chances of finding a good approximation when it is possible.

\section{$D$-Optimal Approximation for Exponents Given by a Class of Arithmetic Sequences}

It can be desirable to minimize the number of points used when constructing the approximation. One way to do this is to choose the $D$-optimal sampling points.

In this section we will only consider the case where in each interval the $n$ exponents, $\beta_1$, \ldots, $\beta_n$, are chosen according to
\[ \beta_m = \frac{k+m-1}{c} \]
where $k$ is a non-negative integer and $c$ a positive real number. Note that in order to guarantee continuity of the AEF derivative the condition is that $k > m$.

Then in each interval we want an approximation of the form
\[ y(t) = \sum_{i=1}^{n} \eta_i t^{\beta_i} e^{\beta_i(1-t)} \]
and by setting $z(t) = (t e^{1-t})^{\frac{1}{c}}$ then
\[ y(t) = \sum_{i=1}^{n} \eta_i z(t)^{k+i-1}. \]

If we have $n$ sample points, $t_i$, $i = 1,\ldots,n$, then the Fischer information matrix, $M$, of this system is $M = W^\top W$ where
\[ W = \begin{bmatrix}
           z(t_1)^k    &    z(t_2)^k    & \ldots &    z(t_n)^k    \\
         z(t_1)^{k+1}  &  z(t_2)^{k+1}  & \ldots &  z(t_n)^{k+1}  \\
            \vdots     &     \vdots     & \ddots &     \vdots     \\
        z(t_1)^{k+n-1} & z(t_2)^{k+n-1} & \ldots & z(t_n)^{k+n-1}
       \end{bmatrix}. \]
Thus if we want to maximize $\det(M) = \det(W)^2$ it is sufficient to maximize or minimize the determinant $\det(W)$. Set $z(t_i) = (t_i e^{1-t_i})^\frac{1}{c} = x_i$ then
\begin{gather} 
 w_n(t_1,\ldots,t_n) = \det(W) \nonumber \\
 \label{eq:fischerinfodet}
 = \left(\prod_{k=1}^{n} x_k\right) \left(\prod_{1\leq i < j \leq n} (x_j-x_i) \right).
\end{gather}

To find $t_i$ we will use the Lambert $W$ function. Formally the Lambert $W$ function is the function $W$ that satisfies $t = W(t e^t)$. Using $W$ we can invert $z(t)$ in the following way
\begin{align}
 t e^{1-t} = x^c & \Leftrightarrow -t e^{-t} = - e^{-1} x^c \nonumber \\
                 \label{eq:x_to_t}
                 & \Leftrightarrow t = -W(- e^{-1} x^c). 
\end{align}
The Lambert $W$ is multivalued but since we are only interested in real-valued solutions we are restricted to the main branches $W_0$ and $W_{-1}$. Since $W_0 \geq -1$ and $W_{-1} \leq -1$ the two branches correspond to the rising and decaying parts of the AEF respectively. We will deal with the details of finding the correct points for the two parts separately.

\subsection{$D$-Optimal Interpolation on the Rising Part}
Finding the $D$-optimal points on the rising part can be done using theorem \ref{thm:max_shift_det}.
\begin{theorem}
 \label{thm:max_shift_det}
 The determinant
 \[ w_n(k;x_1,\ldots,x_n) = \left(\prod_{i=1}^{n} x_i^k\right) \left(\prod_{1\leq i < j \leq n} (x_j-x_i) \right) \]
 is maximized or minimized on the cube $[0,1]^n$ when $x_1 < \ldots < x_{n-1}$ are roots of the Jacobi polynomial
 \begin{equation*}
  \label{eq:doptimalpolynomial}
  P_{n-1}^{(2k-1,0)}(1-2x) = \frac{(2k)^{\overline{n-1}}}{(n-1)!} \sum_{i=0}^{n-1} (-1)^n \binom{n-1}{i} \frac{(2k+n)^{\overline{i}}}{(2k)^{\overline{i}}} x^i
 \end{equation*}
 and $x_n = 1$, or some permutation thereof. \\
 Here $a^{\overline{b}}$ is the rising factorial $a^{\overline{b}} = a(a+1)\cdots(a+b-1)$.
\end{theorem}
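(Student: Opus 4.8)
The plan is to treat this as a constrained optimization problem and exploit the classical Stieltjes correspondence between electrostatic equilibria and zeros of Jacobi polynomials. On the ordered region $0 < x_1 < \cdots < x_n \le 1$ both factors of $w_n$ are positive, so maximizing $w_n$ is equivalent to maximizing
\[ \log w_n = k\sum_{i=1}^{n}\log x_i + \sum_{1\le i<j\le n}\log(x_j - x_i); \]
since transposing two coordinates only flips the sign of the Vandermonde factor, it suffices to analyse this ordered region, which is exactly what the phrase ``or some permutation thereof'' records. First I would show the largest coordinate is pinned to the boundary: because
\[ \frac{\partial \log w_n}{\partial x_n} = \frac{k}{x_n} + \sum_{j<n}\frac{1}{x_n - x_j} > 0 \]
is a sum of strictly positive terms, $w_n$ is strictly increasing in $x_n$, and the extremum must occur at $x_n = 1$.

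Next I would characterise the remaining $n-1$ points. Fixing $x_n = 1$ and setting $\partial_{x_m}\log w_n = 0$ for $m=1,\dots,n-1$ yields the equilibrium system
\[ \frac{k}{x_m} + \frac{1}{x_m-1} + \sum_{j=1,\,j\ne m}^{n-1}\frac{1}{x_m - x_j} = 0, \]
which is precisely the Stieltjes condition for $n-1$ unit charges on $(0,1)$ subject to fixed charges at the endpoints $0$ and $1$. Introducing the monic polynomial $Q(x) = \prod_{m=1}^{n-1}(x - x_m)$ and using the standard root identity $\sum_{j\ne m}(x_m-x_j)^{-1} = Q''(x_m)/\bigl(2Q'(x_m)\bigr)$, I would clear denominators to rewrite each equation as the vanishing at $x_m$ of the degree-$(n-1)$ polynomial
\[ H(x) = x(x-1)Q''(x) + \bigl((2k+2)x - 2k\bigr)Q'(x). \]
Since $H$ vanishes at all $n-1$ roots of $Q$ and has degree $n-1$, it must be a constant multiple of $Q$; comparing leading coefficients gives $H = (n-1)(n+2k)\,Q$, so $Q$ satisfies a second-order linear ODE of hypergeometric type.

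The decisive step is to recognise this ODE as a Jacobi equation. Writing the relation as
\[ x(1-x)Q'' + \bigl(2k - (2k+2)x\bigr)Q' + (n-1)(n+2k)\,Q = 0 \]
and applying the affine substitution $y = 1-2x$ that carries $[0,1]$ onto $[-1,1]$, the equation becomes the standard Jacobi equation $(1-y^2)u'' + \bigl(\beta-\alpha-(\alpha+\beta+2)y\bigr)u' + N(N+\alpha+\beta+1)u = 0$ with degree $N = n-1$. Matching the first-order coefficient forces $\alpha + 1 = 2k$, i.e. $\alpha = 2k-1$, while the relation $\alpha+\beta+2 = 2k+2$ together with the eigenvalue $(n-1)(n+2k)$ pins down the second index; the unique monic polynomial solution of degree $n-1$ is then a scalar multiple of $P_{n-1}^{(\alpha,\beta)}(1-2x)$. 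Finally I would expand this polynomial through its hypergeometric representation $P_{N}^{(\alpha,\beta)}(1-2x) = \frac{(\alpha+1)^{\overline{N}}}{N!}\,{}_2F_1(-N, N+\alpha+\beta+1; \alpha+1; x)$ to recover the explicit rising-factorial coefficients in the statement, using that $(\alpha+1)_i = (2k)^{\overline{i}}$ and $N+\alpha+\beta+1 = 2k+n$, and cross-checking the indices and the normalising constant against the elementary cases $n=1,2$, where the optimiser is found directly by calculus.

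To promote this critical configuration to a genuine extremum, I would verify that $\log w_n$ is strictly concave on the open ordered simplex: each term $k\log x_i$ is strictly concave, and every $\log(x_j-x_i)$ has a negative semidefinite Hessian, so the full Hessian is negative definite and the equilibrium point is the unique maximiser on the face $x_n=1$. Since $w_n$ vanishes on every remaining face of the cube (where some $x_i=0$ or two coordinates coincide), this configuration is the global extremum up to permutation, which establishes the claim. I expect the only real difficulty to be bookkeeping rather than conceptual: carefully propagating the constants through the reduction $H = \lambda Q$ and the substitution $y = 1-2x$ so that the Jacobi indices and the normalising constant in the closed-form expansion emerge exactly, and confirming that the determinant truly attains its extreme value, rather than a saddle, at the computed configuration.
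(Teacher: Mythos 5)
Your proposal follows essentially the same route as the paper's own proof: monotonicity in $x_n$ pins the largest coordinate at $1$; the critical-point equations are the Stieltjes equilibrium conditions; a root-counting argument converts them into a second-order ODE of hypergeometric type; and the Jacobi polynomial is identified through its ${}_2F_1$ representation. Working directly with the monic degree-$(n-1)$ polynomial $Q$ and the multiplier identity $H=(n-1)(n+2k)Q$ is only cosmetically different from the paper's device of setting $f(x)=(x-1)g(x)$, and your equation $x(1-x)Q''+\bigl(2k-(2k+2)x\bigr)Q'+(n-1)(2k+n)Q=0$ is exactly the one the paper derives. Two of your additions go beyond the paper: the explicit reduction to the ordered region via antisymmetry of the Vandermonde factor, and the strict-concavity argument showing the critical configuration is the unique maximizer on the face $x_n=1$ (the paper never checks that its critical point is an extremum rather than a saddle). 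The concavity step needs the small caveat that the diagonal terms $-k/x_i^2$ require $k>0$; for $k=0$ negative definiteness instead comes from the $\log(1-x_i)$ terms contributed by the fixed charge at $x_n=1$.

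There is, however, one place where you are vague exactly where you must commit to a value, and it conceals a real discrepancy with the statement you set out to prove. Your own matching relations $\alpha+1=2k$ and $\alpha+\beta+2=2k+2$ force $\beta=1$, and your identification $N+\alpha+\beta+1=2k+n$ with $N=n-1$, $\alpha=2k-1$ gives $\beta=1$ again; yet you only say this ``pins down the second index'' without naming it. Carried out honestly, your argument shows that the interior points are the roots of $P_{n-1}^{(2k-1,1)}(1-2x)$, \emph{not} of $P_{n-1}^{(2k-1,0)}(1-2x)$ as the theorem asserts. In fact your conclusion is the correct one and the superscript in the statement is an error: the explicit sum displayed in the theorem (with the evident typo $(-1)^n\to(-1)^i$) is the expansion of $\frac{(2k)^{\overline{n-1}}}{(n-1)!}\,{}_2F_1(1-n,2k+n;2k;x)=P_{n-1}^{(2k-1,1)}(1-2x)$. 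The case $n=2$ confirms this: the extremizer of $x_1^k(1-x_1)$ is $x_1=k/(k+1)$, which is the root of $P_1^{(2k-1,1)}(1-2x)$, whereas $P_1^{(2k-1,0)}(1-2x)$ vanishes at $x=2k/(2k+1)$. The paper slips because it quotes the ${}_2F_1$--Jacobi relation with second parameter $1+\alpha+\beta+n$ where the degree $m=n-1$ should appear, and this off-by-one is what allows it to claim $\beta=0$. You should therefore state $\beta=1$ explicitly and flag that it corrects the Jacobi index in the theorem; as written, your proof reads as if it establishes the statement verbatim while actually (and correctly) contradicting its label.
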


\begin{proof}
 Without loss of generality we can assume $0 < x_1 < x_2 < \ldots < x_{n-1} < x_n \leq 1$. Fix all $x_i$ except $x_n$. When $x_n$ increases all factors of $w_n$ that contain $x_n$ will also increase, thus $w_n$ will reach its maximum value on the edge of the cube where $x_n = 1$.
 Using the method of Lagrange multipliers in the plane given by $x_n = 1$ gives
 \[\frac{\partial w_n}{\partial x_j}  = w_n(k;x_1,\ldots,x_n) \left(\frac{k}{x_j} + \sum_{\genfrac{}{}{0pt}{}{i=1}{i \neq j}}^{n} \frac{1}{x_j-x_i} \right) = 0, \]
 for $j=1,\ldots,n-1$. By setting $f(x) = \displaystyle\prod_{i=1}^{n} (x-x_i)$ we get
 \begin{gather}
  \frac{k}{x_j} + \sum_{\genfrac{}{}{0pt}{}{i=1}{i \neq j}}^{n} \frac{1}{x_j-x_i} = 0 \Leftrightarrow \frac{k}{x_j} + \frac{1}{2} \frac{f''(x_j)}{f'(x_j)} = 0 \nonumber \\
  \label{eq:diff_eq_1}
  \Leftrightarrow x_j f''(x_j) + 2kf'(x_j) = 0
 \end{gather}
 for $j=1,\ldots,n-1$.
 Since $f(x)$ is a polynomial of degree $n$ that has $x = 1$ as a root then equation (\ref{eq:diff_eq_1}) implies
 \begin{equation*}
  x f''(x) + 2kf'(x) = c \frac{f(x)}{x-1}
 \end{equation*}
 where $c$ is some constant.
 Set $f(x) = (x-1)g(x)$ and the resulting differential equation is
 \begin{equation*}
  x(x-1)g''(x) + ((2k+2)x-2k)g'(x) + (2k-c)g(x) = 0.
 \end{equation*}
The constant $c$ can be found by examining the terms with degree $n-1$ and is given by $c = 2k + (n-1)(2k+n)$, thus
 \begin{gather}
   x(1-x)g''(x) + (2k-(2k+2)x)g'(x) \nonumber \\ 
   \label{eq:diff_eq_2}
   + (n-1)(2k+n)g(x) = 0.
 \end{gather}
 Comparing (\ref{eq:diff_eq_2}) with the standard form of the hypergeometric function \cite{AbramowitzStegun}
 \[ x(1-x) g''(x) + (c-(a+b+1)x) g'(x) - ab g(x) = 0 \]
 shows that $g(x)$ can be expressed as follows
 \begin{align*} 
  g(x) & = C \cdot \,_2\!F_1(1-n,2k+n;2k,x) \\
       & = C \cdot \frac{(2k)^{\overline{n-1}}}{(n-1)!} \sum_{i=0}^{n-1} (-1)^i \binom{n-1}{i} \frac{(2k+n)^{\overline{i}}}{(2k)^{\overline{i}}} x^i
 \end{align*}
 where $C$ is an arbitrary constant and since we are only interested in the roots of the polynomial we can chose $C$ so that it gives the desired form of the expression.
 The connection to the Jacobi polynomial is given by \cite{AbramowitzStegun}
 \[ _2F_1(-m,1+\alpha+\beta+n;\alpha+1;x)=\frac{m!}{(\alpha+1)^{\overline{m}}} P^{(\alpha,\beta)}_m(1-2x), \]
 and $\alpha = 2k-1$, $\beta = 0$, $m = n-1$ gives the expression in theorem \ref{thm:max_shift_det}.
\end{proof}

We can now find the $D$-optimal $t$-values using the upper branch of the Lambert W function as described in equation~(\ref{eq:x_to_t}),
\[ t_i = -W_0(- e^{-1} x_i^m), \] 
where $x_i$ are the roots of the Jacobi polynomial given in theorem \ref{thm:max_shift_det}. Since $-1 \leq W_0(x) \leq 0$ for $-e^{-1} \leq 0$ this will always give $0 \leq t_i \leq 1$.

\begin{remark}
 Note that $x_n = 1$ means that $t_n = t_q$ and also is equivalent to the condition that $\displaystyle\sum_{k=1}^{n_q} \eta_{q,k} = 1$. In other words we are interpolating the peak and $p-1$ points inside each interval.
\end{remark}

\subsection{$D$-Optimal Interpolation on the Decaying Part}
Finding the $D$-optimal points for the decaying part is more difficult than it is for the rising part. Suppose we denote the largest value for time that can reasonably be used (for computational or experimental reasons) with $t_{max}$. This corresponds to some value $x_{max} = (t_{max} \exp(1-t_{max}))^{\frac{1}{c}}$. Ideally we would want a corresponding theorem to theorem \ref{thm:max_shift_det} over $[1,x_{max}]^n$ instead of $[0,1]^n$. It is easy to see that if $x_i = 0$ or $x_i = 1$ for some $1 \leq x_i \leq n-1$ then $w_n(k;x_1,\ldots,x_n) = 0$ and thus there must exist some local extreme point such that $0 < x_1 < x_2 < \ldots < x_{n-1} < 1$. This is no longer guaranteed when considering the volume $[1,x_{max}]^n$ instead. Therefore we will instead extend theorem \ref{thm:max_shift_det} to the volume $[0,x_{max}]^n$ and give an extra constraint on the parameter $k$ that guarantees $1 < x_1 < x_2 < \ldots < x_{n-1} < x_n = x_{max}$.

\begin{theorem}
 Let $y_1 < y_2 < \ldots < y_{n-1}$ be the roots of the Jacobi polynomial $P_{n-1}^{(2k-1,0)}(1-2y)$. If $k$ is chosen such that $1 < x_{max} \cdot y_1$ then the determinant $w_n(k;x_1,\ldots,x_n)$ given in Theorem \ref{thm:max_shift_det} is maximized or minimized on the cube $[1,x_{max}]^n$ (where $x_{max} > 1$) when $x_i = x_{max} \cdot y_i$ and $x_n = x_{max}$, or some permutation thereof.
\end{theorem}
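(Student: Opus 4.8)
The plan is to reduce the statement to Theorem~\ref{thm:max_shift_det} by an affine rescaling, using the extra hypothesis $1 < x_{max}\,y_1$ only at the very end to certify that the extremal configuration produced by that theorem actually lands inside the smaller cube. First I would introduce the substitution $x_i = x_{max}\,u_i$ and record how $w_n$ scales under it. Each of the $n$ factors $x_i^k$ contributes a factor $x_{max}^k$, and each of the $\binom{n}{2}$ Vandermonde factors $x_j - x_i$ contributes a factor $x_{max}$, so that
\[ w_n(k; x_{max}u_1,\ldots,x_{max}u_n) = x_{max}^{\,nk + \binom{n}{2}}\, w_n(k; u_1,\ldots,u_n). \]
Since $x_{max} > 1$, the prefactor is a fixed positive constant, whence extremising $w_n$ over $x_i \in [0,x_{max}]$ is exactly the same optimisation problem as extremising $w_n$ over $u_i \in [0,1]$. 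Theorem~\ref{thm:max_shift_det} then applies verbatim in the $u$-variables and shows that the extremum over $[0,x_{max}]^n$ is attained when $u_1 < \ldots < u_{n-1}$ are the roots $y_i$ of $P_{n-1}^{(2k-1,0)}(1-2u)$ and $u_n = 1$; translating back, at $x_i = x_{max}\,y_i$ and $x_n = x_{max}$, or some permutation thereof.

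Next I would upgrade the conclusion from $[0,x_{max}]^n$ to $[1,x_{max}]^n$. Because $[1,x_{max}]^n \subset [0,x_{max}]^n$, the extreme value of $w_n$ over the smaller cube cannot exceed (in absolute value) the extreme value over the larger one, so it suffices to verify that the configuration just found already lies in $[1,x_{max}]^n$: if it does, that same point realises the extremum on the subcube as well. Here I would invoke the standard fact that every root of a Jacobi polynomial lies strictly inside its interval, which under the change of variable $s = 1 - 2u$ gives $0 < y_1 < \ldots < y_{n-1} < 1$. Consequently $x_{max}\,y_i < x_{max}$ for every $i$, while the single hypothesis $1 < x_{max}\,y_1$ together with the monotonicity of the roots yields $1 < x_{max}\,y_1 \leq x_{max}\,y_i$ for all $i$; combined with $x_n = x_{max}$, every coordinate then lies in $(1,x_{max}]$, as required.

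I expect the one genuinely delicate point to be the passage in the previous paragraph, namely the claim that it is enough for the global extremiser of the enlarged cube to fall inside the smaller cube. On $[0,x_{max}]^n$ the determinant vanishes whenever some $x_i = 0$ or two coordinates coincide, and it is precisely this vanishing on the boundary that forces an interior extremum and makes Theorem~\ref{thm:max_shift_det} applicable. That vanishing is \emph{lost} on $[1,x_{max}]^n$, where $w_n$ need not be zero on the faces $x_i = 1$, so in principle new boundary extrema could appear. The role of the condition $1 < x_{max}\,y_1$ is exactly to exclude this by guaranteeing that the unconstrained extremiser remains admissible; I would make this rigorous through the containment-and-monotonicity argument above rather than by re-running a Lagrange-multiplier analysis on the smaller cube. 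As in Theorem~\ref{thm:max_shift_det}, the maximum-versus-minimum dichotomy and the phrase ``or some permutation thereof'' are accounted for by the sign of the Vandermonde factor under reorderings of the $x_i$, and the positivity of the scaling prefactor preserves this, so no separate argument is needed on that point.
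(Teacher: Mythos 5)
Your proposal is correct and follows essentially the same route as the paper: reduce to Theorem~\ref{thm:max_shift_det} via the homogeneity of $w_n$ under the scaling $x_i = x_{max}u_i$, then use the hypothesis $1 < x_{max}\,y_1$ to place the extremiser inside $[1,x_{max}]^n$. If anything, your write-up is more careful than the paper's: you make explicit the containment argument that transfers the extremum from $[0,x_{max}]^n$ to the subcube $[1,x_{max}]^n$ (the paper's proof stops at $[0,b]^n$ and leaves this step implicit), and your scaling exponent $nk+\binom{n}{2}$ is the correct one, whereas the paper's displayed exponent $k+\frac{n(n-1)}{2}$ drops the factor $n$ on $k$.
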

\begin{proof}
 This theorem follows from Theorem \ref{thm:max_shift_det} combined with the fact that \\ $w_n(k;x_1,\ldots,x_n)$ is a homogeneous polynomial. 
 Since 
 \[ w_n(k; b \cdot x_1,\ldots, c \cdot x_n) = b^{k+\frac{n(n-1)}{2}} \cdot w_n(k;x_1,\ldots,x_n) \]
 if $(x_1,\ldots,x_n)$ is an extreme point in $[0,1]^n$ then $(b \cdot x_1,\ldots,b \cdot x_n)$ is an extreme point in $[0,b]^n$. Thus by theorem \ref{thm:max_shift_det} the points given by $x_i = x_{max} \cdot y_i$ will maximize or minimize $w_n(k;x_1,\ldots,x_n)$ on $[0,b]^n$.
\end{proof}
\begin{remark}
 It is in many cases possible to ensure the condition $1 < x_{max} \cdot y_1$ without actually calculating the roots of $P_{n-1}^{(2k-1,0)}(1-2y)$. In the literature on orthogonal polynomials there are many expressions for upper and lower bounds of the roots of the Jacobi polynomials. For instance in \cite{DriverJordaan} an upper bound on the largest root of a Jacobi polynomial is given that in our case can be rewritten as
 \[ y_1 > 1-\frac{3}{4k^2+2kn+n^2-k-2n+1} \]
 and thus
 \[ 1-\frac{3}{4k^2+2kn+n^2-k-2n+1} > \frac{1}{x_{max}} \]
 guarantees that $1 < x_{max} \cdot y_1$. If a more precise condition is needed there are expressions that give tighter bounds of the largest root of the Jacobi polynomials, see \cite{Lucas}. 
\end{remark}
We can now find the $D$-optimal $t$-values using the lower branch of the Lambert W function as described in equation (\ref{eq:x_to_t}),
\[ t_i = -W_{-1}(- e^{-1} x_i^c), \] 
where $x_i$ are the roots of the Jacobi polynomial given in Theorem \ref{thm:max_shift_det}. Since $-1 \leq W_{-1}(x) < -\infty$ for $-e^{-1} \leq x \leq 0$ this will always give $1 \leq t_i < b$.

\begin{remark}
 Note that here just like in the rising part $t_n = t_p$ which means that we will interpolate to the final peak as well as $p-1$ points in the decaying part.
\end{remark}

\section{Examples of Models from Applications and Experiments}

In this section some results of applying the described scheme to two different waveforms will be presented. The two waveforms are the Standard ESD current given in IEC 61000-4-2,~\cite{IECStandard2009} and a waveform from experimental measurements from ~\cite{Katsivelis2010}.

The values of $n$, $k$ and $c$ have been chosen by manual experimentation and since both waveforms are given as data rather than explicit functions the $D$-optimal points have been calculated and then the closest available data points have been chosen.

Note that the quality of the results can vary greatly depending on how the $k$ and $m$ parameters are chosen before this type of approximation scheme is applied, and in practice a strategy for choosing the values effectively should be devised. In many cases increasing the number of interpolation points, $n$, improves the results but there are many cases where the interpolation is not stable.

\subsection{Interpolated AEF Representing the IEC 61000-4-2 Standard Current}

In this section we present the results of fitting 2- and 3-peak AEF to the Standard ESD current given in IEC 61000-4-2. Data points which are used in the optimization procedure are manually sampled from the graphically given Standard \cite{IECStandard2009} current function. The peak currents and corresponding times are also extracted, and the results of $D$-optimal interpolation with 2 and 3 peaks are illustrated, see Fig.~\ref{fig:ESD_AEF_max} and \ref{fig:ESD_AEF}. The parameters are listed in Table~\ref{tab:Standard}. In the illustrated examples a fairly good fit is found but typically areas with steep rise and the decay part are somewhat more difficult to fit with good accuracy than the other parts of the waveform.
\begin{figure}[!t]
 \centering
 \includegraphics[width=2.5in]{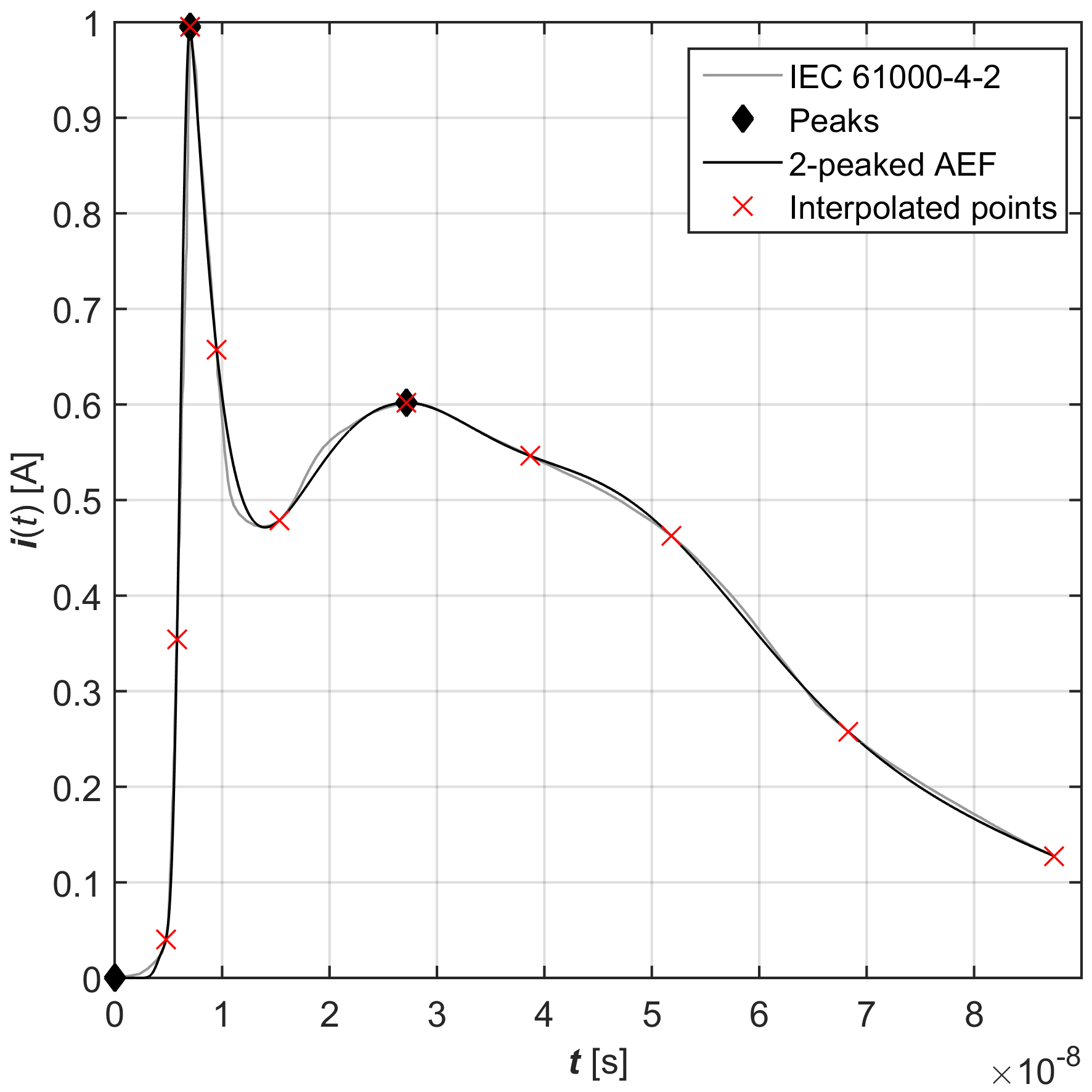}
 \caption{2-peaked AEF representing the IEC 61000-4-2 Standard ESD current waveshape for 4kV. Parameters are given in Table \ref{tab:Standard}.}
 \label{fig:ESD_AEF_max}
\end{figure}
\begin{figure}[!t]
 \centering
 \includegraphics[width=2.5in]{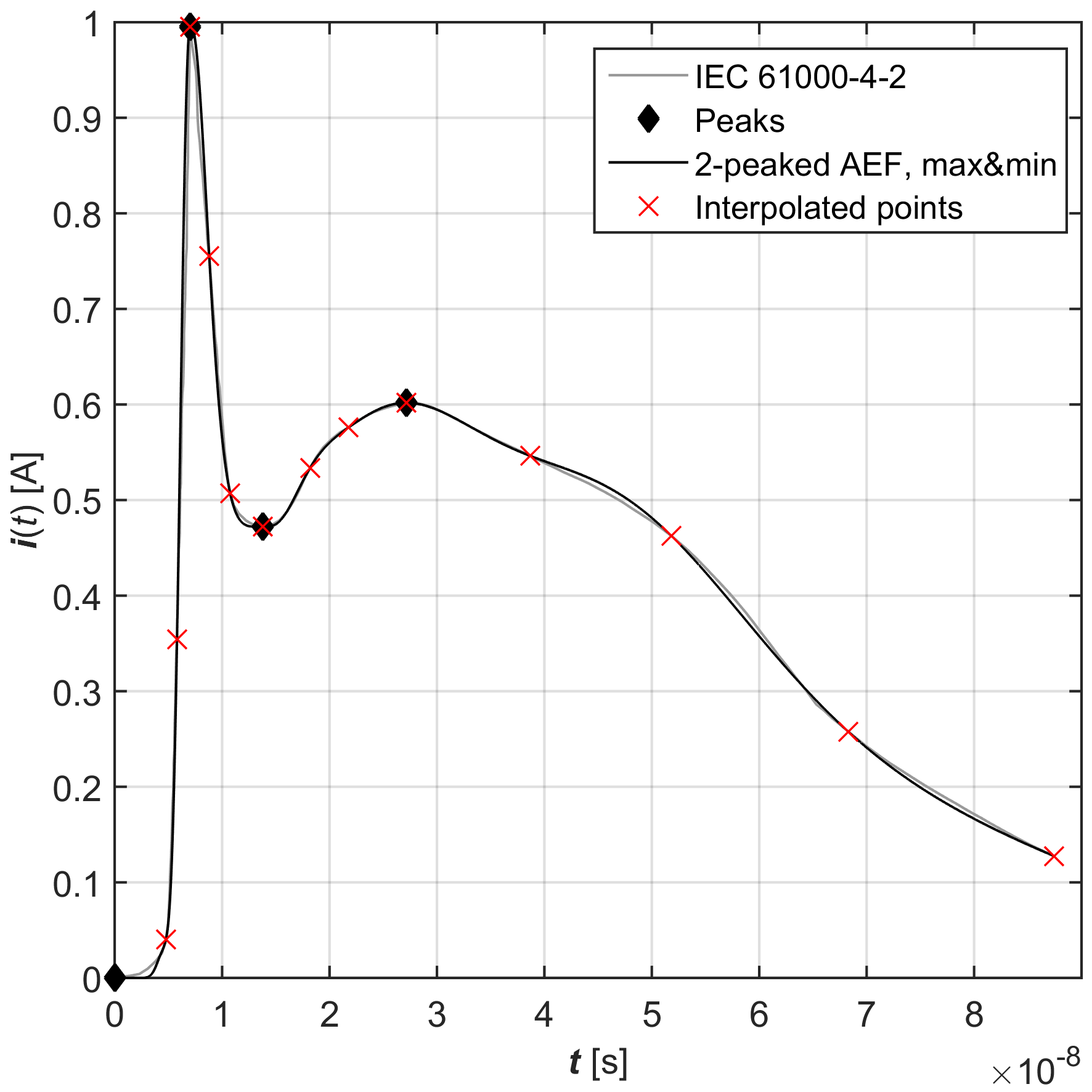}
 \caption{3-peaked AEF representing the IEC 61000-4-2 Standard ESD current waveshape for 4kV. Parameters are given in Table \ref{tab:Standard}.}
 \label{fig:ESD_AEF}
\end{figure}

\begin{table}[!t]
 \renewcommand{\arraystretch}{1.5}
 \caption{Parameters' Values of the Multi-Peaked AEFs Representing the IEC 61000-4-2 Standard Waveshape.}
 \label{tab:Standard}
 \centering
 \begin{tabular}{cccccc}
  \hline
  \multicolumn{6}{c}{Local maxima and minima and corresponding times}\\
  \multicolumn{6}{c}{extracted from the IEC 61000-4-2,~\cite{IECStandard2009}}\\ 
  \hline
  \multicolumn{2}{c}{$~~~~I_{max1}$ = 15 [A] } & \multicolumn{2}{c}{$~~I_{min1}$  = 7.1484 [A] } & \multicolumn{2}{c}{$I_{max2}$ = 9.0921 [A]}\\ 
  \multicolumn{2}{c}{$~~~~t_{max1}$ = 6.89 [ns]} &\multicolumn{2}{c}{$~~t_{min1}$ = 12.85 [ns]} &\multicolumn{2}{c}{ $t_{max2}$ = 25.54 [ns]}\\
  \hline
  \multicolumn{6}{c}{Parameters of interpolated AEF shown in Fig.~\ref{fig:ESD_AEF_max}} \\ \hline
  \multicolumn{3}{c}{Interval} & \multicolumn{1}{c}{$n$} & \multicolumn{1}{c}{$k$}& \multicolumn{1}{c}{$c$} \\
  \hline
  \multicolumn{3}{c}{$0 \leq t \leq t_{max1}$}        & \multicolumn{1}{c}{$3$} & \multicolumn{1}{c}{$35$} & \multicolumn{1}{c}{$1$} \\
  \multicolumn{3}{c}{$t_{max1} \leq t \leq t_{max2}$} & \multicolumn{1}{c}{$3$} & \multicolumn{1}{c}{$3$}  & \multicolumn{1}{c}{$2$} \\
  \multicolumn{3}{c}{$t_{max2} < t$}                  & \multicolumn{1}{c}{$5$} & \multicolumn{1}{c}{$3$}  & \multicolumn{1}{c}{$1$} \\
  \hline
  \multicolumn{6}{c}{Parameters of interpolated AEF shown in Fig.~\ref{fig:ESD_AEF}} \\ \hline
  \multicolumn{3}{c}{Interval} & \multicolumn{1}{c}{$n$} & \multicolumn{1}{c}{$k$}& \multicolumn{1}{c}{$c$} \\
  \hline
  \multicolumn{3}{c}{$0 \leq t \leq t_{max1}$}        & \multicolumn{1}{c}{$3$} & \multicolumn{1}{c}{$35$} & \multicolumn{1}{c}{$1$} \\
  \multicolumn{3}{c}{$t_{max1} \leq t \leq t_{min1}$} & \multicolumn{1}{c}{$3$} & \multicolumn{1}{c}{$3$}  & \multicolumn{1}{c}{$1$} \\
  \multicolumn{3}{c}{$t_{min1} \leq t \leq t_{max2}$} & \multicolumn{1}{c}{$3$} & \multicolumn{1}{c}{$4$}  & \multicolumn{1}{c}{$1$} \\
  \multicolumn{3}{c}{$t_{max2} < t$}                  & \multicolumn{1}{c}{$5$} & \multicolumn{1}{c}{$3$}  & \multicolumn{1}{c}{$1$} \\
  \hline
 \end{tabular}
\end{table}

\subsection{3-peaked AEF Representing Measured Data}
In this section we present the results of fitting a 1-, 2- and a 3-peaked AEF to a waveform from experimental measurements from ~\cite{Katsivelis2010}. The result is also compared to a common type of function used for modelling ESD current, also from \cite{Katsivelis2010}.

In Figs. \ref{fig:Kats_1}-\ref{fig:Kats_3} the results of the interpolation of $D$-optimal points for certain parameters are shown together with the measured data, as well as a sum of two Heidler functions that was fitted to the experimental data in \cite{Katsivelis2010}. This function is given by
\begin{equation} 
i(t) = I_1 \frac{\left(\frac{t}{\tau_1}\right)^{n_H}}{1+\left(\frac{t}{\tau_1}\right)^{n_H}} e^{-\frac{t}{\tau_2}} + I_2 \frac{\left(\frac{t}{\tau_3}\right)^{n_H}}{1+\left(\frac{t}{\tau_3}\right)^{n_H}} e^{-\frac{t}{\tau_4}}, 
\end{equation}
with
\begin{gather*} 
I_1 = 31.365 \mathrm{~A},~I_2 = 6.854 \mathrm{~A},~n_H=4.036, \\
\tau_1=1.226 \mathrm{~ns},~\tau_2=1.359 \mathrm{~ns}, \\
\tau_3=3,982 \mathrm{~ns},~\tau_4=28.817 \mathrm{~ns}. 
\end{gather*}

Note that this function does not reproduce the second local minimum but that all three AEF functions can reproduce all local minima and maxima (to a modest degree of accuracy) when suitable values for the $n$, $k$ and $m$ parameters are chosen.

\begin{table}[!t]
 \renewcommand{\arraystretch}{1.5}
 \caption{Parameters' Values of Multi-Peaked AEFs Representing Experimental Data.}
 \label{tab:Katsivelis}
 \centering
 \begin{tabular}{cccccc}
  \hline
  \multicolumn{6}{c}{Local maxima and corresponding times extracted from~\cite[Fig.~3]{Katsivelis2010}}\\ 
  \hline
  \multicolumn{2}{c}{$~~~~I_{max1}$ = 7.37 [A] } & \multicolumn{2}{c}{$~~I_{max2}$  =  5.02 [A] } & \multicolumn{2}{c}{$I_{max3}$ = 3.82 [A]}\\ 
  \multicolumn{2}{c}{$~~~~t_{max1}$ = 1.23 [ns]} & \multicolumn{2}{c}{$~~t_{max2}$ = 6.39 [ns]}   & \multicolumn{2}{c}{ $t_{max3}$ = 15.5 [ns]}\\
  \hline
  \multicolumn{6}{c}{Parameters of interpolated AEF shown in Fig.~\ref{fig:Kats_1}} \\ \hline
  \multicolumn{3}{c}{Interval} & \multicolumn{1}{c}{$n$} & \multicolumn{1}{c}{$k$}& \multicolumn{1}{c}{$c$} \\
  \hline
  \multicolumn{3}{c}{$0 \leq t \leq t_{max3}$} & \multicolumn{1}{c}{$12$} & \multicolumn{1}{c}{$8$} & \multicolumn{1}{c}{$0.8$} \\
  \multicolumn{3}{c}{$t_{max3} < t$}           & \multicolumn{1}{c}{$6$}  & \multicolumn{1}{c}{$2$} & \multicolumn{1}{c}{$1$} \\
  \hline
  \multicolumn{6}{c}{Parameters of interpolated AEF shown in Fig.~\ref{fig:Kats_2}} \\ \hline
  \multicolumn{3}{c}{Interval} & \multicolumn{1}{c}{$n$} & \multicolumn{1}{c}{$k$}& \multicolumn{1}{c}{$c$} \\
  \hline
  \multicolumn{3}{c}{$0 \leq t \leq t_{max1}$}        & \multicolumn{1}{c}{$5$} & \multicolumn{1}{c}{$40$} & \multicolumn{1}{c}{$1$} \\
  \multicolumn{3}{c}{$t_{max1} \leq t \leq t_{max3}$} & \multicolumn{1}{c}{$7$} & \multicolumn{1}{c}{$4$}  & \multicolumn{1}{c}{$1$} \\
  \multicolumn{3}{c}{$t_{max3} < t$}                  & \multicolumn{1}{c}{$6$} & \multicolumn{1}{c}{$2$}  & \multicolumn{1}{c}{$1$} \\
  \hline
  \multicolumn{6}{c}{Parameters of interpolated AEF shown in Fig.~\ref{fig:Kats_3}} \\ \hline
  \multicolumn{3}{c}{Interval} & \multicolumn{1}{c}{$n$} & \multicolumn{1}{c}{$k$}& \multicolumn{1}{c}{$c$} \\
  \hline
  \multicolumn{3}{c}{$0 \leq t \leq t_{max1}$}        & \multicolumn{1}{c}{$5$} & \multicolumn{1}{c}{$40$} & \multicolumn{1}{c}{$1$} \\
  \multicolumn{3}{c}{$t_{max1} \leq t \leq t_{max2}$} & \multicolumn{1}{c}{$3$} & \multicolumn{1}{c}{$3$}  & \multicolumn{1}{c}{$1$} \\
  \multicolumn{3}{c}{$t_{max2} \leq t \leq t_{max3}$} & \multicolumn{1}{c}{$4$} & \multicolumn{1}{c}{$4$}  & \multicolumn{1}{c}{$1$} \\
  \multicolumn{3}{c}{$t_{max3} < t$}                  & \multicolumn{1}{c}{$6$} & \multicolumn{1}{c}{$2$}  & \multicolumn{1}{c}{$1$} \\
  \hline
 \end{tabular}
\end{table}

\begin{figure}[!t]
 \centering
 \includegraphics[width=2.5in]{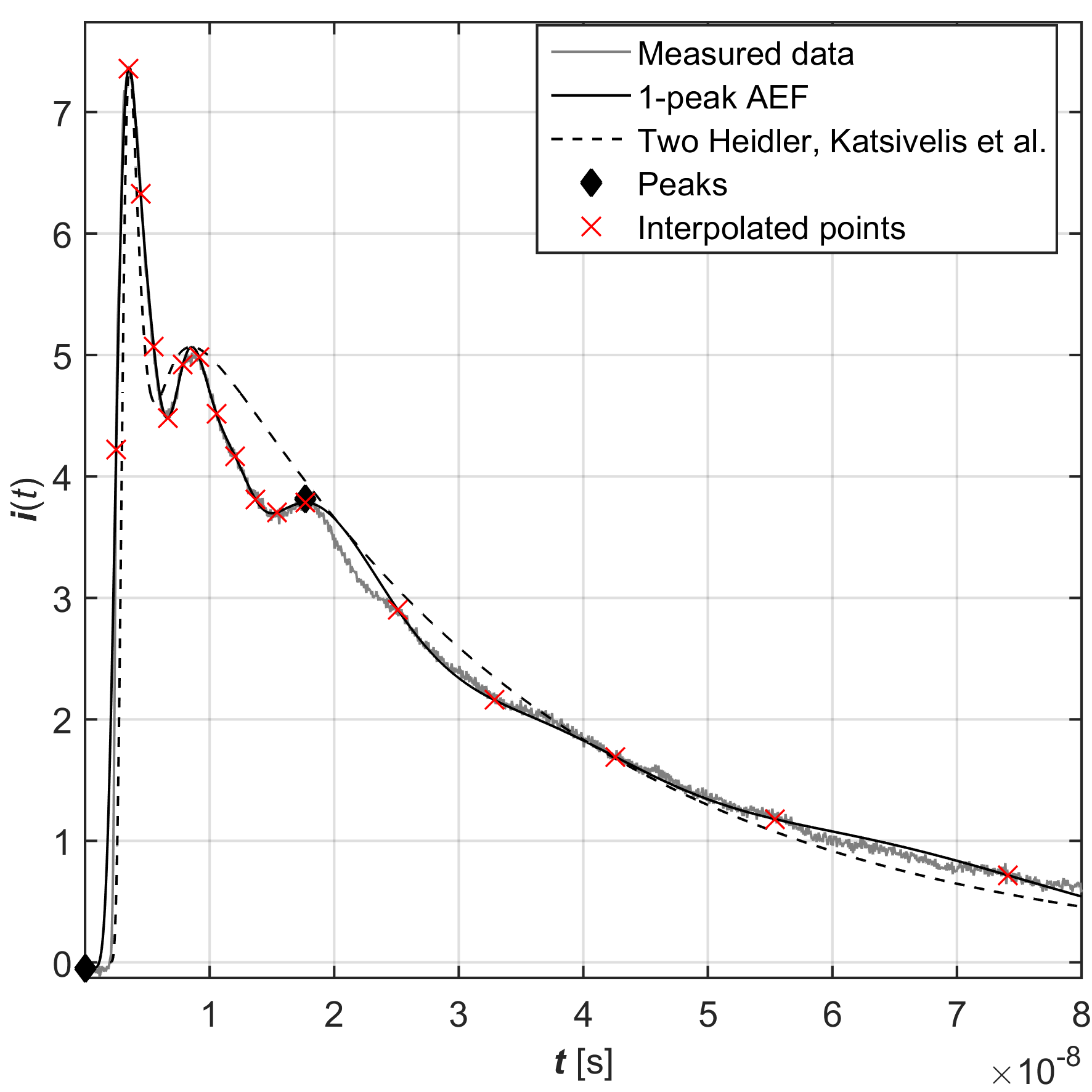}
 \caption{1-peak AEF interpolated to $D$-optimal points chosen from measured ESD current from~\cite[Fig.3]{Katsivelis2010}. Parameters are given in Table \ref{tab:Katsivelis}.}
 \label{fig:Kats_1}
\end{figure}
\begin{figure}[!t]
 \centering
 \includegraphics[width=2.5in]{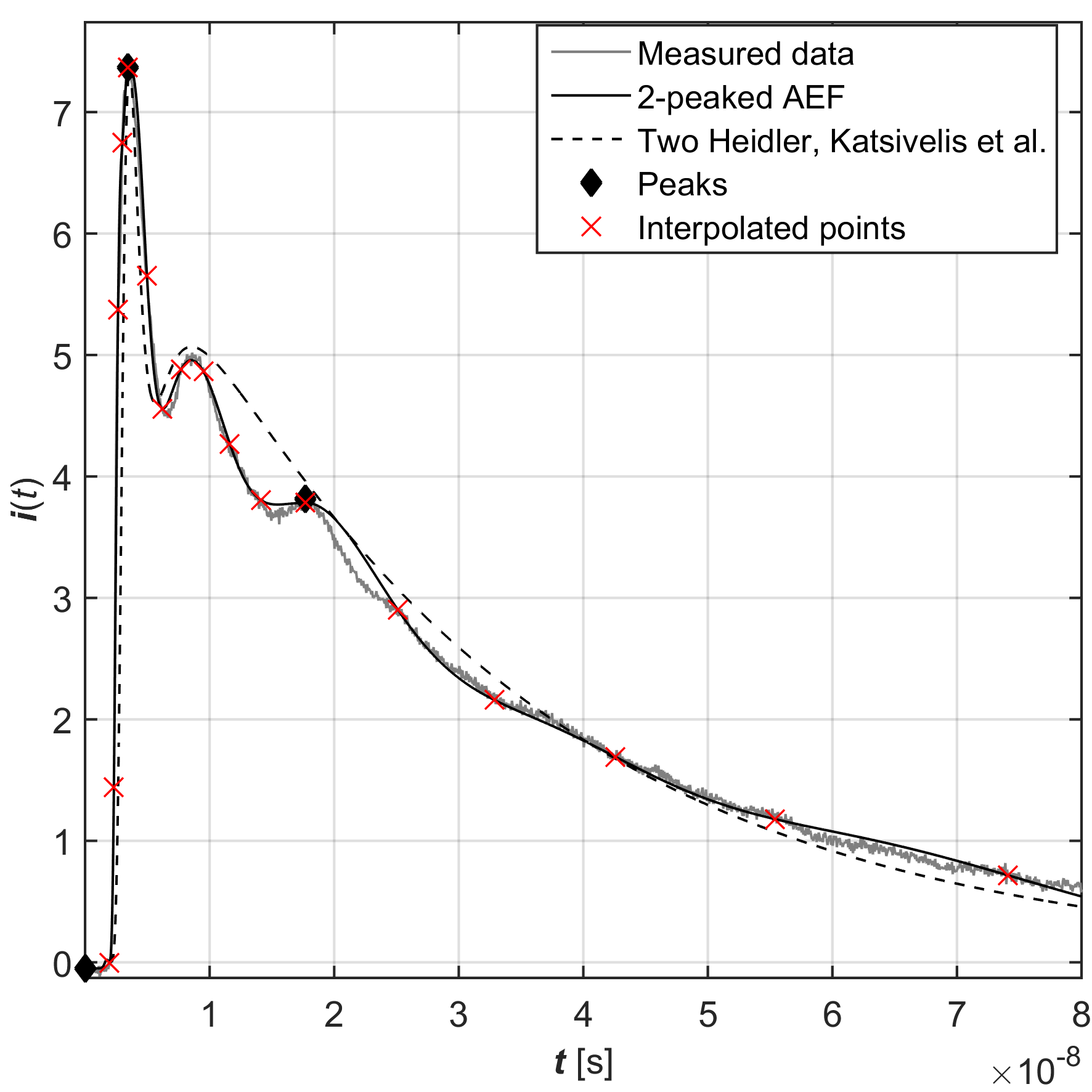}
 \caption{2-peaked AEF interpolated to $D$-optimal points chosen from measured ESD current from~\cite[Fig.3]{Katsivelis2010}. Parameters are givend in Table \ref{tab:Katsivelis}.}
 \label{fig:Kats_2}
\end{figure}
\begin{figure}[!t]
 \centering
 \includegraphics[width=2.5in]{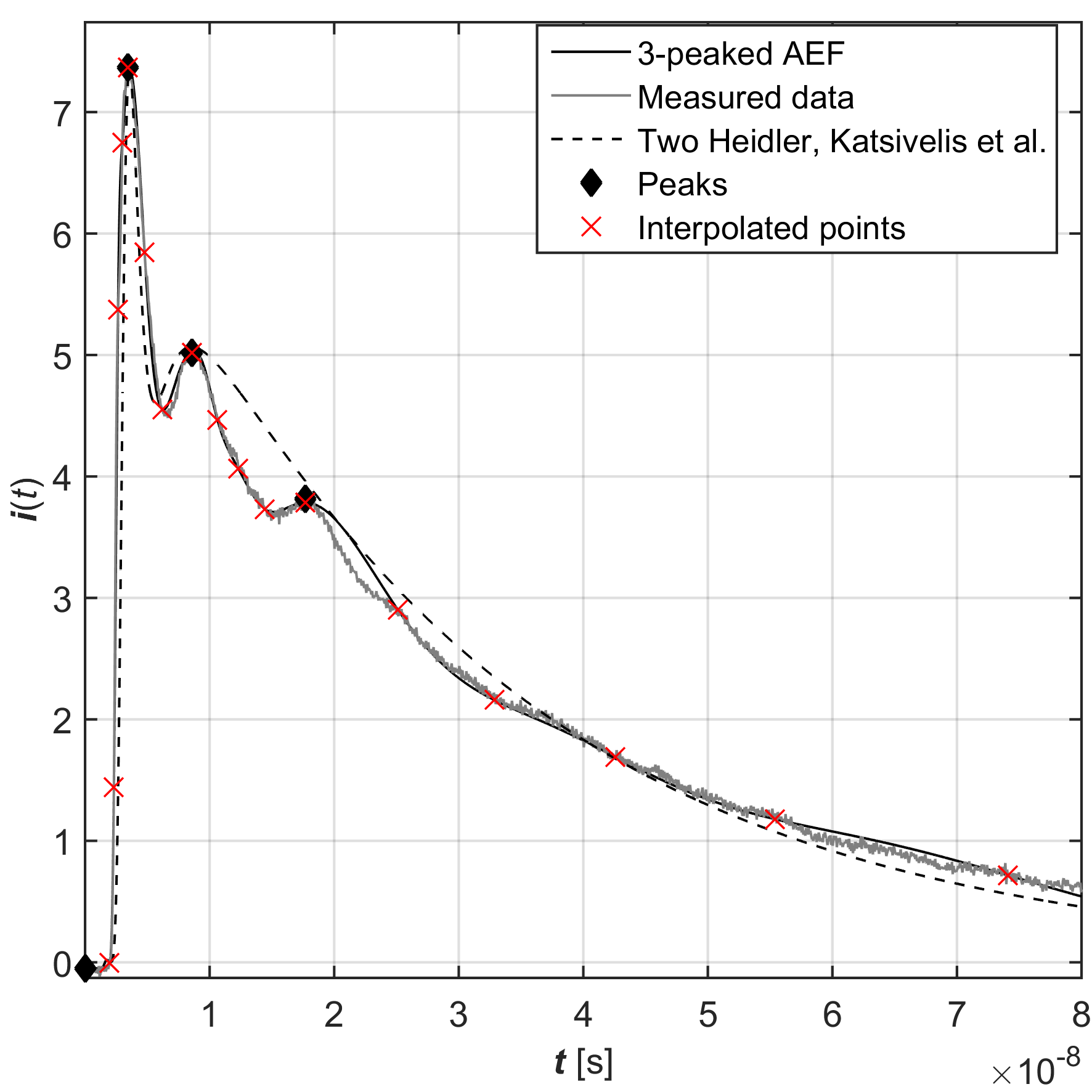}
 \caption{3-peaked AEF interpolated to $D$-optimal points chosen from measured ESD current from~\cite[Fig.3]{Katsivelis2010}. Parameters are given in Table \ref{tab:Katsivelis}.}
 \label{fig:Kats_3}
\end{figure}

\section{Conclusion}
In this work we examine a mathematical model for representation of ESD currents, either the IEC 61000-4-2 Standard one,~\cite{IECStandard2009}, or experimentally measured ones. The model has been proposed and successfully applied to lightning current modelling in~\cite{Lundengard_PES,Lundengard_SJEE,Lundengard_MCAP} and named the multi-peaked analytically extended function (AEF).

It conforms to the requirements for the ESD current and its first derivative, which are imposed by the Standard \cite{IECStandard2009} stating that they must be equal to zero at moment $t=0$. Furthermore, the AEF function is time-integrable,~\cite{Lundengard_MCAP}, which is necessary for numerical calculation of radiated fields originating from the ESD current. 

Here we consider how the model can be fitted to a waveform using $D$-optimal interpolation and the resulting methodology is illustrated on the IEC 61000-4-2 Standard waveform ~\cite{IECStandard2009} and experimental data from \cite{Katsivelis2010}.

The resulting methodology can give fairly accurate results even with a modest number of interpolated points but strategies for choosing some of the involved parameters should be further investigated.

\section*{Acknowledgement}

The authors would like to thank Dr. Pavlos Katsivelis from the High Voltage Laboratory, School of Electrical and Computer Engineering, National Technical University of Athens, Greece, for providing measured ESD current data.

\end{document}